\def\full{0}
\def\BState{\State\hskip-\ALG@thistlm}
\titlespacing{\section}{0pt}{2ex}{1ex}
\titlespacing{\subsection}{0pt}{1ex}{1ex}
\titlespacing{\subsubsection}{0pt}{0.5ex}{0ex}
\renewenvironment{enumerate}[1]{\begin{compactenum}#1}{\end{compactenum}}
\newtheorem{theorem}{Theorem}[section]
\newtheorem{lemma}[theorem]{Lemma}
\newtheorem{claim}[theorem]{Claim}
\newtheorem{definition}{Definition}[section]
\numberwithin{figure}{section}
\newcommand{\eps}{{\epsilon}}
\newcommand{\ord}[2][th]{\ensuremath{{#2}^{\mathrm{#1}}}}
\newenvironment{myproof}{\begin{proof}
}{
\end{proof}}
\let\oldReturn\Return
\renewcommand{\Return}{\State\oldReturn}
\date{}
\title{Testing Unateness of Real-Valued Functions\footnote{This work was partially supported by NSF award CCF-1422975.}}
\author{Roksana Baleshzar\footnote{Pennsylvania State University, {\sf rxb5410@cse.psu.edu, mzm269@psu.edu, rxp271@cse.psu.edu, sofya@cse.psu.edu.}} \and Meiram Murzabulatov\footnotemark[2] \and Ramesh Krishnan S. Pallavoor\footnotemark[2] \and Sofya Raskhodnikova\footnotemark[2]}
\begin{document}

\maketitle
\begin{abstract}
We give a unateness tester for functions of the form $f:[n]^d\rightarrow R$, where $n,d\in \mathbb{N}$ and $R\subseteq \mathbb{R}$ with query complexity $O(\frac{d\log (\max(d,n))}{\epsilon})$. Previously known unateness testers work only for Boolean functions over the domain $\{0,1\}^d$. We show that every unateness tester for real-valued functions over hypergrid has query complexity $\Omega(\min\{d, |R|^2\})$. Consequently, our tester is nearly optimal for real-valued functions over $\{0,1\}^d$. We also prove that every nonadaptive, 1-sided error unateness tester for Boolean functions needs $\Omega(\sqrt{d}/\epsilon)$ queries. Previously, no lower bounds for testing unateness were known.
\end{abstract}

\section{Introduction}
We study property testing unateness of functions of the form $f:[n]^d\rightarrow R$, where $n,d\in \mathbb{N}$ and $R\subseteq \mathbb{R}$. Unate functions are used in mathematics and other technical disciplines, for example in tautology checking \cite{KC86, M14} and in switching theory \cite{S12}. For $x,y\in [n]^d$, where $x=x_1,\ldots,x_d$ and $y=y_1,\ldots,y_d$, we define $|x-y|_1=\sum\limits_{i=1}^n |x_i-y_i|$. Two points $x,y$ are {\sf neighbours} if $|x-y|_1=1$. A function $f:[n]^d\rightarrow \mathbb{R}$ is {\sf unate} if each dimension $i\in[d]$ can be assigned an associated direction {\sf up} or {\sf down}; the direction is {\sf up} if $f(x)\leq f(y)$ for all neighbours $x,y$ where $y_i=x_i+1$; and the direction is {\sf down} if $f(x)\geq f(y)$ for all such $x$ and $y$. Moreover, a function is {\sf monotone} if the direction is {\sf up} in all dimensions. Thus, unateness is a generalization of monotonicity.

The domain $[n]^d$ is called a {\sf hypergrid} and the special case $\{0,1\}^d$ is called a {\sf hypercube}. The {\sf distance} between two functions $f,g:[n]^d\rightarrow \mathbb{R}$ is equal to the fraction of points $x\in [n]^d$ where $f(x)\neq g(x)$. Given a parameter $\eps \in(0,1)$, two functions $f$ and $g$ are $\eps$-far from each other if the distance between $f$ and $g$ is at least $\eps$. A function $f$ is $\eps$-far from a property $P$ if it is $\eps$-far from any function which has property $P$. A {\sf property tester}~\cite{GGR96,RS96} for a property $P$ is a randomized algorithm which, given parameter $\eps \in(0,1)$ and oracle access to the input function $f$, accepts $f$ with probability $\frac{2}{3}$, if it has the property $P$, and rejects $f$ with probability $\frac{2}{3}$, if it is $\eps$-far from $P$. A tester is nonadaptive if it makes all queries in advance, and adaptive otherwise. A testing algorithm for property $P$ has $1$-sided error if it always accepts all inputs that satisfy $P$ and $2$-sided error, otherwise.
A {\sf unateness tester} is a property tester for unateness. 
To the best of our knowledge, testing unateness has been studied previously only for Boolean functions over the hypercube domain.


\subsection{Our results}
In this paper, we give a unateness tester for real-valued function over the hypergrid. It is the first unateness tester for this type of functions. We also present first lower bounds on the number of queries required to test unateness. In particular, our tester and lower bound differ only by a logarithmic factor for real-valued functions over the hypercube.

The query complexity of our unateness tester is $O(\frac{d\log (\max(d,n))}{\eps})$ for functions $f:[n]^d\rightarrow \mathbb{R}$. Our tester generalizes the tester of~\citet{KS16} to real-valued functions over the hypergrid domain. Their tester has query complexity $O(\frac{d\log d}{\eps})$ and it works for Boolean functions over the hypercube domain.

We also present two lower bounds for testing unateness of functions over the hypercube domain. The first lower bound of $\Omega(\min\{d, |R|^2\})$, where $R$ is the range of the function, is over the real-valued functions. Our second lower bound of $\Omega(\sqrt{d}/\epsilon)$ is for nonadaptive, 1-sided error unateness testers over Boolean functions. Our lower bounds build on the same lower bounds for monotonicity testing by \citet{BBM12} and \citet{FLNRRS02} of Boolean functions over the hypercube domain.

\subsection{Previous work}
The problem of testing unateness was first considered by \citet{GGLRS00}. In their work, by extending their monotonicity tester, they obtain a nonadaptive, $1$-sided error tester for unateness with query complexity $O(\frac{d^{3/2}}{\epsilon})$. Recently, ~\citet{KS16} gave an adaptive unateness tester with query complexity $O(\frac{d\log d}{\eps})$. 

The related property of monotonicity has been studied extensively for different types of functions in the context of property testing. The problem of testing monotonicity of Boolean functions over the hypercube domain was first introduced by \citet{GGLRS00}. It was shown in \cite{DGLRRS99,GGLRS00} that monotonicity for hypercube domain can be tested with query complexity $O(\frac{d}{\epsilon})$. A monotonicity tester with better query complexity of $\tilde{O}(\frac{d^{7/8}}{\epsilon^{3/2}})$ was introduced by \citet{CS13b}. \citet{CST14} modified the tester of \citet{CS13b} and improved the query complexity to $\tilde{O}(\frac{d^{5/6}}{\epsilon^4})$. Most recently, \citet{KMS15} improved the query complexity of the tester to $\tilde{O}(\frac{\sqrt{d}}{\epsilon^2})$.

The lower bound for any nonadaptive one-sided error tester for monotonicity over the hypercube was proved to be $\Omega(\sqrt{d})$ by \citet{FLNRRS02}. Also, \citet{CDST15} gave a lower bound of almost $\Omega(\sqrt{d})$ for any nonadaptive, two-sided error tester. Moreover, there is a lower bound of $\Omega(\min\{d, |R|^2\})$ over the real-valued functions by \citet{BBM12}. Most recently, \citet{BB16} gave a lower bound of $\tilde{\Omega}(d^{\frac{1}{4}})$ for adaptive testers of Boolean functions over the hypercube. \citet{CS13} proved that any adaptive, two-sided monotonicity tester for functions $f:[n]^d\rightarrow \mathbb{N}$ must make $\Omega(\frac{d\log n- \log \eps^{-1}}{\eps})$ queries.

\section{Algorithm for Testing Unateness}\label{sec:upper-bound-for-unateness}
In this section, we present an algorithm to test unateness of real-valued functions over the hypergrid domain and prove Theorem~\ref{thm:thm41}.

We use the notation $z\circ_T w$ to denote concatenation with respect to the locations in the set $T$.
More specifically, this operation fills the locations in $T$ with $|T|$ bits in $z$, and the other $[d]\setminus T$ locations with $d-|T|$ bits in $w$, respectively.

\begin{theorem}\label{thm:thm41}
For a given function $f:[n]^d\rightarrow \mathbb{R}$, there exists a $1$-sided error unateness tester with a query complexity of $O(\frac{d\log (max(d,n))}{\eps})$, where $\eps \in (0,\frac{1}{2})$ is the proximity parameter.
\end{theorem}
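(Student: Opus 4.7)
The plan is to generalize the \citet{KS16} unateness tester for Boolean functions on the hypercube to real-valued functions on the hypergrid. The tester I propose performs $T = O(d\log(\max(d,n))/\eps)$ rounds. Each round picks a dimension $i \in [d]$ uniformly, picks two points $x,y$ uniformly at random with $x_j = y_j$ for $j \neq i$ and $x_i < y_i$, and queries $f(x)$ and $f(y)$. If $f(x) < f(y)$, the round records ``direction down is killed in dimension $i$''; if $f(x) > f(y)$, it records ``direction up is killed in dimension $i$.'' The tester rejects if some dimension receives both certificates, and accepts otherwise. Since each round uses only two queries, the total query complexity is $O(d\log(\max(d,n))/\eps)$.

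Completeness is immediate: a unate function $f$ has a fixed direction $\sigma_i$ in each coordinate, so the algorithm cannot observe a pair witnessing the opposite direction, the certificate killing $\sigma_i$ is never issued, and no coordinate receives both certificates. Hence unate functions are accepted with probability one, yielding the $1$-sided guarantee. Note in particular that there is no need to use binary search to drop to an edge-level witness: a single long pair $(x,y)$ with $x_i<y_i$ and $f(x)<f(y)$ already rules out the ``down'' direction for coordinate $i$, because ``down''-unateness in dimension $i$ would force $f(x)\geq f(y)$.

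For soundness, define $\mu^{up}_i$ (respectively $\mu^{dn}_i$) as the probability that a uniformly random axis-aligned pair in dimension $i$ has $f(x) < f(y)$ (respectively $f(x) > f(y)$). The heart of the analysis is a repair lemma: if $f$ is $\eps$-far from unate, then $\sum_{i=1}^d \min(\mu^{up}_i,\mu^{dn}_i) = \Omega(\eps/\log n)$. I plan to establish this by fixing a closest unate function $g$ to $f$, reading off its canonical direction vector $\sigma^*$, noting that the distance from $f$ to $\sigma^*$-monotone functions equals the distance from $f$ to unateness by optimality of $g$, and then applying the hypergrid per-dimension decomposition of distance to monotonicity from~\cite{DGLRRS99} to bound this by $O(\log n) \sum_i \mu^{\sigma^*_i}_i$. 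By optimality of $\sigma^*$, each term equals $\min(\mu^{up}_i,\mu^{dn}_i)$, which gives the claim. Given the repair lemma, a standard Chernoff plus union-bound argument over the $T$ rounds shows that with probability at least $2/3$ some dimension receives both certificates; the $\log(\max(d,n))$ factor absorbs both the $1/\log n$ loss from the repair lemma and the $\log d$ amplification required by the union bound over dimensions.

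The main obstacle I anticipate is the repair lemma, and specifically arguing that the argument is valid regardless of which closest unate function $g$ one picks when the optimizer is non-unique. I plan to sidestep this by working with an arbitrary optimizer and verifying that the chain of inequalities depends only on $\sigma^*$ through the quantities $\mu^{\sigma^*_i}_i$, each of which is bounded from below by $\min(\mu^{up}_i,\mu^{dn}_i)$. A secondary subtlety is that the decomposition from~\cite{DGLRRS99} is stated for the distance of a real-valued function to monotonicity on $[n]^d$; I will verify it applies to $\sigma$-monotonicity by permuting axis directions, which preserves the combinatorial structure of the hypergrid.
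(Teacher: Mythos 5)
There is a genuine gap in the soundness argument, and it has two parts. First, the repair lemma is false for \emph{uniformly} random axis-aligned pairs on the hypergrid. Take $d=1$ and the sawtooth $f(2j)=2j+1$, $f(2j+1)=2j$ on $[n]$: it is $\Omega(1)$-far from unate (it is $1/2$-far from non-decreasing and almost $1$-far from non-increasing), yet a uniform pair $x<y$ is decreasing only if $y=x+1$ with $x$ even, so $\min(\mu^{up}_1,\mu^{dn}_1)=O(1/n)$, not $\Omega(\eps/\log n)$. The $O(\log n)$ loss you invoke from the line/dimension-reduction analysis of \cite{DGLRRS99} is tied to a non-uniform pair distribution (dyadic distances / spanner pairs), not to uniform pairs, so your chain of inequalities breaks at exactly that step; the step ``by optimality of $\sigma^*$, $\mu^{\sigma^*_i}_i$ equals $\min(\mu^{up}_i,\mu^{dn}_i)$'' is also unjustified (the closest unate function's orientation need not minimize per-dimension violation probabilities, and in any case the inequality you can get, $\mu^{\sigma^*_i}_i\geq\min(\mu^{up}_i,\mu^{dn}_i)$, points the wrong way for your purpose). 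Second, even granting a correct bound of the form $\sum_i\min(\mu^{up}_i,\mu^{dn}_i)=\Omega(\eps/\mathrm{polylog})$, your rejection event requires \emph{two} certificates (one up, one down) in the \emph{same} dimension; this is a birthday-type requirement, not something a Chernoff-plus-union argument over dimensions fixes. If the mass is spread evenly, each rare certificate for a fixed dimension appears per round with probability about $\eps/(d^2\,\mathrm{polylog})$, so your nonadaptive pair tester needs on the order of $d^2/\eps$ (or, after balancing, $d^{3/2}/\eps$) rounds -- this is precisely why the nonadaptive tester of \citet{GGLRS00} costs $O(d^{3/2}/\eps)$ and why \citet{KS16} resort to adaptivity.

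The paper avoids both problems by a different route: it adaptively finds influential coordinates by sampling two random points agreeing on the already-found set $T$ and binary-searching down to a single violating edge (fixing that coordinate's direction from one witnessed edge, which is enough for $1$-sided completeness), and then runs the $O(d\log n/\eps)$ hypergrid monotonicity tester of \citet{CS13} on a random restriction $f_w$ with those fixed directions. Its Lemmas~\ref{lem:lem41} and~\ref{lem:lem43} show that after $O(d/\eps)$ rounds the un-found coordinates have negligible influence, so $f_w$ is close to $f$ and inherits $\eps/2$-farness from unateness, hence from $\sigma$-monotonicity for the recorded directions. If you want to salvage your approach you would need both a non-uniform pair distribution along lines (to recover the $\log n$-type loss) and a mechanism that pins down a coordinate's direction from a single witness rather than waiting for two independent rare events -- which essentially forces the adaptive binary-search structure the paper uses.
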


\begin{proof}
The main idea of the tester described in Algorithm~\ref{Unateness_Tester} is as follows. Given a function $f:[n]^d\rightarrow \mathbb{R}$, the tester finds a subset of {\sf influential} dimensions $T\subseteq [d]$, such that the dependence of $f$ on the remaining dimensions, $[d]\setminus T$, is negligible. Furthermore, for each influential dimension $i\in T$, the tester finds an edge $(x,x+e_i)$, where $f(x)\neq f(x+e_i)$, and assigns a 
{\sf direction} to the $i$-th dimension according to the value of $f$ at these two points. The direction is {\sf up} if  $f(x)<f(x+e_i)$ and it is {\sf down} if  $f(x)> f(x+e_i)$. After that, the algorithm applies a monotonicity tester on $f$ with respect to the directions for the dimensions in $T$ and some random assignments $w$ for other dimensions, and outputs the answer.

\begin{algorithm}
\caption{{\sf Unateness Tester}$(f:[n]^d\rightarrow \mathbb{R},\eps)$}
\label{Unateness_Tester} 
\begin{algorithmic}[1]
\State Let $m=O(\frac{d}{\eps})$ and $T=\emptyset$
\For {$i=1...m$}
\State {\sf Find\_an\_Influential\_Dimension}$(f,T)$
\If {it returns a dimension and its direction $(i^*,b_{i^*})$}
\State Add $i^*$ to $T$, and let $b_{i^*}$ be its direction
\EndIf
\EndFor
\State Pick a uniform $w\in[n]^{[d]\setminus T}$
\State Let $f_w(z)=f(z\circ_T w)$ for all $z\in [n]^T$
\State Apply the monotonicity tester by \citet{CS13} on $f_w$ with respect to the directions $\{b_i:i\in T\}$ and proximity parameter $\frac{\eps}{2}$
\Return the output of the monotonicity tester
\end{algorithmic}
\end{algorithm}

In order to find influential dimensions, the tester uses the function {\sf Find\_an\_Influential\_Dimension}, which is presented in Algorithm~\ref{Find_an_Influential_Dimension}. It works as a subroutine and tries to find an influential dimension in each run. It gets oracle access to a function $f:[n]^d\rightarrow \mathbb{R}$, and a subset of dimensions $T\subseteq [d]$, which is empty at the beginning. This subroutine outputs $\perp$ if it doesn't find any new influential dimension. Otherwise, it outputs a dimension $i^*\in [d]\setminus T$ and its direction $b_{i^*}\in\{{\sf up,down}\}$.

This subroutine can find a new influential dimension and its direction with high probability if $f$ has non-negligible dependence on the dimensions in $[d]\setminus T$. Each time we run {\sf Find\_an\_Influential\_Dimension}, it picks two points $x,y\in[n]^d$ uniformly and independently at random, whose coordinates are equal in the dimensions in $T$. If $f(x)\neq f(y)$, it uses {\sf binary search} to decrease the distance between $x$ and $y$ to $1$, while ensuring that $f(x)\neq f(y)$ still holds.

\begin{algorithm}
\caption{{\sf Find\_an\_Influential\_Dimension}$(f:[n]^d\rightarrow \mathbb{R},T)$}
\label{Find_an_Influential_Dimension} 
\begin{algorithmic}[1]
\State Pick $x,y \in_R [n]^d$ independently and uniformly at random such that $x_T=y_T$
\If{$f(x)=f(y)$}
\Return $\perp$
\Else
\Do
\State $U\leftarrow \{i\in [d]:x_i=y_i\}$
\State $V\leftarrow \{i\in [d]:x_i\neq y_i\}$
\State Let $V = \{v_1,...,v_t\}$
\State Set $z_V=x_{v_1}\circ ...\circ x_{v_{\lfloor\frac{t}{2}\rfloor}}\circ y_{v_{\lfloor\frac{t}{2}\rfloor+1}}\circ ...\circ y_{v_t}$
\State Let $z=x_U\circ_T z_V \in [n]^d$
\If {$f(x)\neq f(z)$}
\State $y\leftarrow z$
\Else
\State $x\leftarrow z$
\EndIf
\doWhile {$|V|\neq1$}
\State Let $i^*\in[d]$ be the element in $V$
\State Let $b\in\{up,down\}$ be the direction of edge $(x,y)$
\Return $(i^*,b)$
\EndIf
\end{algorithmic}
\end{algorithm}

If the input function $f$ is unate, then the tester always accepts, as Algorithm~\ref{Find_an_Influential_Dimension} always finds the correct direction for each influential dimension and the function with respect to the influential dimensions is monotone.

In order to show the correctness of Algorithm~\ref{Unateness_Tester}, we will prove the following two lemmas. In these lemmas let $r$ denote the number of different values in the image of the function $f$, and $P_{k}^{z}$ where $k\in[r]$ and $z\in [n]^T$ be the fraction of the $k^{\text{th}}$ frequent value among the values of $f(x|_{x_T=z})$.

\begin{lemma}\label{lem:lem41}
Let $f:[n]^d\rightarrow \mathbb{R}$ and $T\subseteq[d]$ be the set in Algorithm~\ref{Unateness_Tester} after running $m=O(\frac{d}{\eps})$ iterations of Algorithm~\ref{Find_an_Influential_Dimension}. Then, with probability at least $5/6$, the set $T$ satisfies
$$\mathbb{E}_{z\in [n]^{T}}\left[\sum\limits_{k=1}^{r} P_{k}^{z}(1-P_{k}^{z}) \right]< \frac{\eps}{16}.$$
\end{lemma}

\begin{lemma}\label{lem:lem43}
Let $f:[n]^d\rightarrow \mathbb{R}$ be a real-valued function, and let $T\subseteq[d]$ such that,
\begin{equation}\label{eq:eq1}
\mathbb{E}_{z\in [n]^{T}}\left[\sum\limits_{k=1}^{r} P_{k}^{z}(1-P_{k}^{z}) \right]< \frac{\eps}{16}.
\end{equation}
Then, for a random $w\in[n]^{[d]\setminus T}$,
$$\Pr_{w\in[n]^{[d]\setminus T}}[{\tt dist}(f_w,f)>\frac{\eps}{2}]\leq\frac{1}{6}.$$
\end{lemma}

\begin{proof}[Proof of Lemma~\ref{lem:lem41}]
First, we show that $\mathbb{E}_{z\in [n]^{T}}\left[\sum\limits_{k=1}^{r} P_{k}^{z}(1-P_{k}^{z})\right]$ is equal to the probability that Algorithm~\ref{Find_an_Influential_Dimension} finds a new dimension on input $(f,T)$. Algorithm~\ref{Find_an_Influential_Dimension} finds a new dimension if and only if $f(x)\neq f(y)$ in Line $2$. By definition of $P_{k}^{z}$, the probability of choosing $x$ such that $f(x)=k$ is $P_{k}^{z}$, and the probability of choosing $y$ such that $f(y)\neq k$ is $1-P_{k}^{z}$. The total probability of choosing two different values is $\sum\limits_{k=1}^{r} P_{k}^{z}(1-P_{k}^{z})$ for a fixed $z$. So, the probability over all possible $z$'s is $\mathbb{E}_{z\in [n]^{T}}\left[\sum\limits_{i=1}^{r} P_{k}^{z}(1-P_{k}^{z})\right]$.

Now, we state the following claim to prove that when the Algorithm~\ref{Find_an_Influential_Dimension} finds a new dimension then the value of $\mathbb{E}_{z\in [n]^{T}}\left[\sum\limits_{k=1}^{r} P_{k}^{z}(1-P_{k}^{z})\right]$ does not increase.

\begin{claim}\label{clm:clm42}
For $\forall T\subseteq [d]$ and $\forall i\in [d]$, when the dimension $i$ is added to the set $T$, the value of
$$\mathbb{E}_{z\in [n]^{T}}\left[\sum\limits_{k=1}^{r} P_{k}^{z}(1-P_{k}^{z})\right]$$
does not increase.
\end{claim}

\begin{proof}
Let $P_{k,T}^{z}$ denote $P_{k}^{z}$ when $z\in[n]^{T}$. Let $T'=T \cup \{i\}$. Define $z_j$, to be equal to $z$ for all dimensions in $T$, and equal to $j$ in the $i^{\text{th}}$ dimension, for each $j \in [n]$. To simplify notation for the rest of the proof, let $P=P_{k,T}^{z}$, $P^j=P_{k,T'}^{z_j}$, for $j \in [n]$. Then, $P={\sum\limits_{j=1}^{n} \frac{P^j}{n}}$. The probability that Algorithm~\ref{Find_an_Influential_Dimension} finds a new dimension on input $(f,T')$ is
$$\mathbb{E}_{z\in [n]^{T}} \left[\sum\limits_{k=1}^{r} \sum\limits_{j=1}^{n} \frac{P^j(1-P^j)}{n} \right].$$
To complete the proof, it is enough to show that $g(P)\geq \sum\limits_{j=1}^{n}\frac{g(P^j)}{n}$, where $g(x)=x(1-x)$. This inequality holds because $P={\sum\limits_{j=1}^{n} \frac{P^j}{n}}$ and $g$ is concave.
\end{proof}

The following claim completes the proof of Lemma~\ref{lem:lem41}.
\begin{claim}\label{clm:clm43}
After $m=O(\frac{d}{\eps})$ iterations of Algorithm~\ref{Find_an_Influential_Dimension}, either the size of the set $T$ is equal to $d$, or with probability at least $5/6$, the following inequality holds,
$$\mathbb{E}_{z\in [n]^{T}} \left[\sum\limits_{k=1}^{r} P_{k}^{z}(1-P_{k}^{z}) \right]< \frac{\eps}{16}.$$
\end{claim}

\begin{proof}
Let $Y_i$ be an indicator random variable, which is equal to $1$ if Algorithm~\ref{Find_an_Influential_Dimension} finds a new dimension in the $i^{\text{th}}$ iteration and $0$, otherwise. We have $\Pr[Y_i]=\mathbb{E}_{z\in [n]^{T_i}}[Y_i]=\mathbb{E}_{z\in [n]^{T_i}}\left[\sum\limits_{k=1}^{r} P_{k}^{z}(1-P_{k}^{z})\right]$, where $T_i$ is the set of influential dimensions in the $i^{\text{th}}$ iteration. Hence, the total number of dimensions that Algorithm~\ref{Find_an_Influential_Dimension} finds after $m$ iterations is $Y=\sum\limits_{i=1}^{m} Y_i$.
Now, after $m$ iterations, if the following holds:
$$\mathbb{E}_{z\in [n]^{T}}\left[\sum\limits_{k=1}^{r} P_{k}^{z}(1-P_{k}^{z}) \right]\geq \frac{\eps}{16},$$
then,
$$\mathbb{E}_{z\in [n]^{|T|}}[Y]\geq \frac{m \eps}{16}.$$
Using Chernoff Bound we have,
$$\Pr_{z\in [n]^{T}}[Y \leq \frac{m\eps}{32}] \leq e^{-\frac{m\eps}{128}}.$$
For $m=\frac{32d}{\eps}$ and $d\geq 8$, we have
$$\Pr_{z\in [n]^{T}}[Y\leq d]\leq e^{-\frac{d}{4}} \leq \frac{1}{6}.$$
\end{proof}
This completes the proof of Lemma~\ref{lem:lem41}.
\end{proof}

\begin{proof}[Proof of Lemma~\ref{lem:lem43}]
Define the plurality function ${\tt Pl}_T:[n]^T\rightarrow [r]$ as ${\tt Pl}_T(z)=k$ such that $P_{k}^{z}$ is the maximum (ties are broken arbitrarily).

By assumption~(\ref{eq:eq1}) for a uniformly random $w\in[n]^{[d]\setminus T}$, it holds that
\begin{align*}
    \mathbb{E}_{w\in[n]^{[d]\setminus T}}[{\tt dist}(f_w,{\tt Pl}_T)] &= \mathbb{E}_{w\in[n]^{[d]\setminus T}}\left[\Pr_{z\in[n]^{T}}[f_w(z)\neq {\tt Pl}_T(z)]\right] \\
    &= \mathbb{E}_{w\in[n]^{[d]\setminus T}} \left[\mathbb{E}_{z\in[n]^{T}}[\mathbbm{1}(f_w(z)\neq {\tt Pl}_T(z))]\right] \\
    &= \mathbb{E}_{z\in[n]^{T}} \left[\mathbb{E}_{w\in{[n]^{[d]\setminus T}}}[\mathbbm{1}(f_w(z)\neq {\tt Pl}_T(z))]\right] \\
    &= \mathbb{E}_{z\in[n]^{T}}\left[\Pr_{w \in [n]^{[d]\setminus T}}[f_w(z)\neq {\tt Pl}_T(z)]\right].
\end{align*}

Without loss of generality, assume that $P_{1}^{z}$ is the probability of the value with the highest frequency in $f(x|_{x_T=z})$. Hence,
$$\mathbb{E}_{z\in[n]^{T}}\left[\Pr_{w\in[n]^{[d]\setminus T}}[f_w(z)\neq {\tt Pl}_T(z)]\right] = \mathbb{E}_{z\in[n]^{T}}[1-P_{1}^{z}].$$

\begin{claim}\label{clm:clm45}
For all $z\in [n]^{T}$,
$$1-P_{1}^{z} \leq \sum\limits_{k=1}^{r} P_{k}^{z}(1-P_{k}^{z}).$$
\end{claim}

\begin{proof}
To simplify notations for the rest of the proof, let $P_k=P_{k}^{z}$. We have
$$\sum\limits_{k=1}^{r} P_{k}(1-P_{k}) = 1-\sum\limits_{k=1}^{r} P_{k}^{2}.$$
So it is enough to show that, $1-P_{1} \leq 1-\sum\limits_{k=1}^{r} P_{k}^{2}$ or $\sum\limits_{k=1}^{r} P_{k}^{2} \leq P_1 $ which can be proved as follows,
\begin{align*}
\sum\limits_{k=1}^{r} \frac{P_{k}^{2}}{P_1} \leq \sum\limits_{k=1}^{r} \frac{P_{k}^{2}}{P_k} = 1
\end{align*}
\end{proof}

By Claim~\ref{clm:clm45} and our initial assumption~(\ref{eq:eq1}), we have
$$\mathbb{E}_{z\in[n]^{T}}[1-P_{1}^{z}] \leq \mathbb{E}_{z\in[n]^{T}}\left[\sum\limits_{k=1}^{r} P_{k}^{z}(1-P_{k}^{z})\right] < \frac{\eps}{16}.$$
So,
$$\mathbb{E}_{w\in[n]^{[d]\setminus T}}[{\tt dist}(f_w,{\tt Pl}_T)]<\frac{\eps}{16}.$$
By Markov's inequality,
$$\Pr_{w\in[n]^{[d]\setminus T}}\left[{\tt dist}(f_w,{\tt Pl}_T)\geq \frac{3\eps}{8}\right]\leq \frac{1}{6}.$$
Moreover,
$${\tt dist}(f,{\tt Pl}_T)=\mathbb{E}_{w\in[n]^{[d]\setminus T}}[{\tt dist}(f_w,{\tt Pl}_T)]<\frac{\eps}{16}.$$
Therefore, by triangle inequality we have,
$$\Pr_{w\in[n]^{[d]\setminus T}}\left[{\tt dist}(f(x|_{x_{[d]\setminus T}=w}),f) \geq \frac{3 \eps}{8}+\frac{\eps}{16}] \leq \Pr_{w\in[n]^{[d]\setminus T}}[{\tt dist}(f_w,{\tt Pl}_T)\geq \frac{7\eps}{16}\right] \leq \frac{1}{6}.$$
This completes the proof of Lemma~\ref{lem:lem43}.
\end{proof}

The proof of Theorem~\ref{thm:thm41} follows from Lemmas~\ref{lem:lem41} and \ref{lem:lem43}. Let $T\subseteq [d]$ be the set of influential dimensions in the unateness tester after $m=O(\frac{d}{\eps})$ iterations of Algorithm~\ref{Find_an_Influential_Dimension}. By Lemma~\ref{lem:lem41} the following inequality holds with probability at least $5/6$
$$\mathbb{E}_{z\in [n]^{T}}\left[\sum\limits_{k=1}^{r} P_{k}^{z}(1-P_{k}^{z})\right]< \frac{\eps}{16}.$$

If $T$ satisfies the above inequality and $f$ is $\eps$-far from being unate, then by Lemma~\ref{lem:lem43}, for a random $w\in[n]^{[d]\setminus T}$, the function $f(x|_{x_{[d]\setminus T}=w})$ is $\frac{\eps}{2}$-far from being unate with probability at least $\frac{5}{6}$. And with the same probability it is $\frac{\eps}{2}$-far from being monotone with respect to the directions of dimensions in the set $T$. So, with high probability the monotonicity tester in Algorithm~\ref{Unateness_Tester} will reject $f(x|_{x_{[d]\setminus T}=w})$ with respect to the directions of the dimensions in $T$.

Now, we analyze the query complexity of Algorithm~\ref{Unateness_Tester}. Each call of Algorithm~\ref{Find_an_Influential_Dimension} makes $O(\log{d})$ queries to perform the binary search. Algorithm~\ref{Find_an_Influential_Dimension} is called $O(\frac{d}{\eps})$ times by Algorithm~\ref{Unateness_Tester}. At the end, Algorithm~\ref{Unateness_Tester} runs monotonicity tester with query complexity $O(\frac{d\log{n}}{\eps})$. So, the total query complexity of Algorithm~\ref{Unateness_Tester} is $O(\frac{d\log (max(d,n))}{\eps})$.

This completes the proof of Theorem~\ref{thm:thm41}.
\end{proof}

\section{Lower Bounds for Unateness}\label{sec:lower-bound-for-unateness}
\subsection{Lower Bound for Real-Valued Functions}\label{sec:lower-bound-for-real-valued-functions}
In this section, we prove an adaptive, 2-sided error lower bound for testing unateness of functions over the hypergrid domain. This lower bound technique builds on the one used for proving the lower bound for monotonicity by \citet{BBM12}.

\begin{theorem}\label{thm:thm1}
Testing $f:\{0,1\}^d \rightarrow R$ for unateness requires $\Omega (min\{d, |R|^2\})$ queries.
\end{theorem}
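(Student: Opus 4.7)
The plan is to adapt the communication-complexity reduction of BBM12 for real-valued monotonicity testing. I would exhibit two distributions $\mathcal{D}_{\mathrm{yes}}$ and $\mathcal{D}_{\mathrm{no}}$ over functions $f:\{0,1\}^d\to R$, parametrized by inputs of a suitable two-party communication problem (set disjointness, or a product variant matching the $\Omega(\min\{d,|R|^2\})$ bound), such that every $f\sim\mathcal{D}_{\mathrm{yes}}$ is unate, every $f\sim\mathcal{D}_{\mathrm{no}}$ is $\Omega(1)$-far from unate, and Alice and Bob can jointly answer any query $f(x)$ with $O(1)$ bits of communication. Yao's minimax principle then converts a $q$-query adaptive, 2-sided error tester into a short protocol, and the communication lower bound yields the stated query bound.

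For the construction I would follow the block structure of BBM12. In the $\Omega(d)$ regime, partition the $d$ coordinates into $d/2$ pairs; inside each pair the inputs of Alice and Bob select between an ``aligned'' local gadget that is monotone in both of its coordinates and an ``anti-aligned'' gadget whose contribution violates monotonicity in both coordinates. In the $\Omega(|R|^2)$ regime, use $\Theta(|R|^2)$ such gadgets whose outputs are combined to produce a value in $R$. In the yes case (e.g., the two strings are disjoint) the gadgets are chosen consistently, making $f$ globally unate; in the no case enough gadgets are anti-aligned to push $f$ far from unate.

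The main obstacle, and the point where a new idea is required beyond BBM12, is proving that $\mathcal{D}_{\mathrm{no}}$-instances are far from \emph{every} unate function, not merely from every monotone one. A function can have arbitrarily many decreasing edges along coordinate $i$ and still be unate with direction \textsf{down} in $i$, so counting monotonicity violations in a single fixed direction is insufficient. I would therefore design each anti-aligned gadget so that, along each of its coordinates, it creates up-violations and down-violations of comparable size on disjoint edges. Then for any candidate orientation $b\in\{\textsf{up},\textsf{down}\}^d$, flipping the direction of any single coordinate only exchanges its up- and down-violations, leaving a constant fraction of the planted violations unresolved. Because the violations produced by different gadgets lie on disjoint edges, the bound holds simultaneously against every orientation (with no union bound blow-up), giving the required $\Omega(1)$ distance from the entire class of unate functions.

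With the far-from-unate property in hand, the remainder is routine: the BBM12-style simulation answers each tester query using $O(1)$ bits of communication, so $q$ queries yield a protocol of $O(q)$ bits for the underlying disjointness-type problem, whose $\Omega(\min\{d,|R|^2\})$ communication lower bound closes the argument.
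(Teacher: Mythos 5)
Your plan for the $\Omega(d)$ regime is essentially the paper's argument. The paper reduces from set disjointness exactly as in BBM12, using $h(x)=2|x|+\chi_S(x)+\chi_T(x)$, and the one genuinely new point relative to the monotonicity bound --- farness from \emph{every} unate function rather than from every monotone one --- is handled precisely by the idea you single out: for a coordinate $i\in S\cap T$, at least a $1/4$ fraction of the edges in direction $i$ are increasing and at least a $1/4$ fraction are decreasing, so whichever direction a unate function assigns to coordinate $i$, a constant fraction of disjoint pairs must be repaired, making $h$ $1/8$-far from unate. Two small cautions: the conversion of a $q$-query tester into a cheap protocol is a direct simulation (each query answered by exchanging $\chi_S(x)$ and $\chi_T(x)$), not an application of Yao's minimax principle; and neither player can ``select'' an anti-aligned gadget on a pair of coordinates, since neither knows $S\cap T$ --- it is the additive form $\chi_S+\chi_T$ that makes the violating behavior emerge only on intersection coordinates while keeping the per-query communication $O(1)$ bits. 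Your pairing-of-coordinates gadget picture would have to be reworked into such an additive form to be simulatable.

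The genuine gap is in the bounded-range regime. ``Use $\Theta(|R|^2)$ such gadgets whose outputs are combined to produce a value in $R$'' does not work as stated: any additive combination of this kind over $m=\Theta(|R|^2)$ coordinates has range of size $\Theta(m)=\Theta(|R|^2)$, which is far larger than $|R|$. The missing ideas, which the paper supplies, are (i) \emph{truncation}: replace $h$ by $h'$ which equals $d-6\sqrt{d}-2$ when $|x|<d/2-3\sqrt{d}$, equals $d+6\sqrt{d}+2$ when $|x|>d/2+3\sqrt{d}$, and equals $h$ in the middle band; this shrinks the range to $O(\sqrt{d})$ values, keeps yes-instances monotone (hence unate), and by a Chernoff bound changes fewer than a $1/16$ fraction of points, so no-instances stay $1/16$-far from unate, giving $\Omega(d)$ whenever $|R|>12\sqrt{d}+5$; and (ii) \emph{padding}: when $|R|=o(\sqrt{d})$, run the construction on $m=\Theta(|R|^2)$ coordinates (the largest $m$ with $|R|>12\sqrt{m}+5$) and extend to $\{0,1\}^d$ by ignoring the other coordinates, checking via an averaging argument that padding preserves unateness, farness, and one-query access, which yields $\Omega(m)=\Omega(|R|^2)$. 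Without these two steps your outline establishes only the real-valued (unbounded range) case and does not give the claimed $\Omega(\min\{d,|R|^2\})$ bound.
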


\begin{proof}
First, we consider the case when $R=\mathbb{R}$. We will prove the lower bound for testing unateness in this case by a reducton from {\sf Set-Disjointness} problem.

\begin{definition}{(Set-Disjointness)}
In the {\sf Set-Disjointness} problem, there are two players Alice and Bob. Alice receives a string $S$ as input and Bob receives a string $T$ as input, where both strings have size $k$. Alice and Bob also have access to a random string. They must compute the value of function $DISJ_k(S,T)$, which is $1$ if $S\cap T=\emptyset$ and $0$ otherwise.
\end{definition}

\begin{theorem}\label{thm:thm2}
Testing $f:\{0,1\}^d \rightarrow \mathbb{R}$ for unateness requires $\Omega (d)$ queries.
\end{theorem}
\begin{proof}
Let $S,T\subseteq [d]$ be the subsets received by Alice and Bob as the input to an instance of the {\sf Set-Disjointness} problem. Alice and Bob build functions $\chi_S, \chi_T:\{0,1\}^d \rightarrow \{-1,1\}$, respectively, where $\chi_S(x) = (-1)^{\sum\nolimits_{i \in S} x_i}$ and $\chi_T(x) = (-1)^{\sum\nolimits_{i \in T} x_i}$. Let $h:\{0,1\}^d \rightarrow \mathbb{Z}$ be a function defined by $h(x) = 2|x|+\chi_S(x)+\chi_T(x)$.

\begin{claim}\label{clm:clm1} The following holds:
\begin{enumerate}[label=(\alph*)]
    \item The function $h$ is unate if $S$ and $T$ are disjoint.
    \item The function $h$ is $\frac{1}{8}$-far from unate, otherwise.
\end{enumerate}
\end{claim}

\begin{proof}
Fix $i \in {[d]}$. For every $x\in \{0,1\}^d$, we define $x^0 , x^1\in \{0,1\}^d$ to be the vectors obtained by fixing the $i^\text{th}$ coordinate of $x$ to $0$ and $1$, respectively.\\
Fist, we prove part $(a)$. Since $S$ and $T$ are disjoint, there are three cases:
\begin{enumerate}
    \item $i\notin S$ and $i\notin T$. Then, $$h(x^1)-h(x^0)=2|x^1|-2|x^0|=2>0.$$
    \item $i\in S$ and $i\notin T$. Then, $$h(x^1)-h(x^0)=2|x^1|-2|x^0|-2\chi_S(x^0)\geq 0.$$
    \item $i\notin S$ and $i\in T$. Then, $$h(x^1)-h(x^0)=2|x^1|-2|x^0|-2\chi_T(x^0)\geq 0.$$
\end{enumerate}
So, when $S$ and $T$ are disjoint, the function $h$ is non-decreasing along dimension $i$.

For the part $(b)$ of our claim, since $S\cap T \neq \phi$ there exists $i \in S \cap T$. For this $i$, we will show that there exists $\frac{1}{4}$-fraction of pairs of the form $(x^0,x^1)$ such that $h(x^1) < h(x^0)$ and another $\frac{1}{4}$-fraction of pairs such that $h(x^1) > h(x^0)$ implying that $h$ is not unate. We have
$$h(x^1)-h(x^0)=2|x^1|-2|x^0|-2\chi_S(x^0)-2\chi_T(x^0).$$
It is clear that if $\chi_S(x^0) = \chi_T(x^0) = 1$ then $h(x^1) < h(x^0)$. And if $\chi_S(x^0) = \chi_T(x^0) = -1$ then $h(x^1) > h(x^0)$. Suppose that $j\in S$ and also $j\neq i$. Let $\bar{x}^0$ be $x^0$ with the $j^\text{th}$ bit flipped. Then, 
$\chi_S(x^0) = -\chi_S(\bar{x}^0)$
, which means $\chi_S(x^0)=1$ with probability $\frac{1}{2}$. By the same argument, $\chi_T(x^0)=1$ with probability $\frac{1}{2}$. So, at least for $\frac{1}{4}$ of these pairs we have $\chi_S(x^0) = \chi_T(x^0) = 1$. By using the same technique we can prove that for at least $\frac{1}{4}$ fraction of these pairs we have $\chi_S(x^0) = \chi_T(x^0) = -1$. 
Therefore, when $S$ and $T$ are overlapping, $h$ is $\frac{1}{8}$-far from unate.
\end{proof}
This completes the proof of Theorem~\ref{thm:thm2}
\end{proof}

Now consider the case when $|R|>12\sqrt{d}+5$. We can again use a reduction from {\sf set-disjointness} problem.
\begin{theorem}\label{thm:thm3}
Testing $f:\{0,1\}^d \rightarrow R$ for unateness, where $|R|>12\sqrt{d}+5$, requires $\Omega (d)$ queries.
\end{theorem}
\begin{proof}
Without lost of generality we can also assume that $\{d-6\sqrt{d}-2,...,d+6\sqrt{d}+2\}\subseteq R$. Let $\chi_S,\chi_T:\{0,1\}^d\rightarrow \{-1,1\}$ be the functions given to Alice and Bob in the communication game, where $S,T\subseteq [d]$. Now we will define the function $h':\{0,1\}^d\rightarrow \mathbb{Z}$ (as defined in \cite{BBM12}), as follows:

\begin{equation*}
h'(x) =
\begin{cases}
  d-6\sqrt{d}-2 & \text{if $|x| < \frac{d}{2}-3\sqrt{d}$;} \\
  d+6\sqrt{d}+2 & \text{if $|x| > \frac{d}{2}+3\sqrt{d}$;} \\
  h(x) & \text{otherwise.}
\end{cases}
\end{equation*}

This function rounds $h(x)$ up to $d-6\sqrt{d}-2$ for strings with low Hamming distance and rounds $h(x)$ down to $d+6\sqrt{d}+2$ for strings with high Hamming distance. It also keeps the actual value of $h$ for points $x$ such that $\frac{d}{2}-3\sqrt{d}\leq|x|\leq \frac{d}{2}+3\sqrt{d}$. By the definition of $h(x)$, this value will be between $d-6\sqrt{d}-2$ and $d+6\sqrt{d}+2$.

\begin{claim}\label{clm:clm2}
Function $h'$ is unate if $h$ is unate and $h'$ is $1/16$-far from unate when $h$ is $1/8$-far from unate.
\end{claim}
\begin{proof}
Since every monotone function is also unate, the first part of the claim directly follows from \citet{BBM12}.

For the second part we prove that the distance between $h$ and $h'$ is at most $\frac{1}{16}$. Thus, when $S$ and $T$ intersect, $h$ is $\frac{1}{8}$-far from unate and so $h'$ is $\frac{1}{16}$-far from unate. 
Let $x$ be a uniformly random point from $\{0,1\}^d$. We know that $h'(x)=h(x)$ when $\frac{d}{2}-3\sqrt{d}\leq|x|\leq \frac{d}{2}+3\sqrt{d}$. By Chernoff Bound,
$$\Pr\left[ \left| |x|-\frac{d}{2} \right| > 3\sqrt{d} \right] < 0.03 < \frac{1}{16} \Rightarrow \Pr[h'(x)\neq h(x)]<\frac{1}{16}.$$
By triangle inequality, $h'$ is at least $\frac{1}{16}$-far from unate.
\end{proof}
This completes the proof of Theorem~\ref{thm:thm3}.
\end{proof}

For the last case, we prove the following theorem.
\begin{theorem}\label{thm:thm4}
Testing a function $f:\{0,1\}^d \rightarrow R$ for unateness, where  $|R|=o(\sqrt{d})$, requires $\Omega (|R|^2)$ queries.
\end{theorem}
\begin{proof}
In this case, we can prove the lower bound by using a reduction to Theorem~\ref{thm:thm3}. Let $m$ be the largest integer such that $|R| > 12\sqrt{m}+5$. 
We prove the following claim.
\begin{claim}\label{clm:clm3}
For every function $g:\{0,1\}^m\rightarrow R$, there exists a function $h:\{0,1\}^d\rightarrow R$ satisfying the following conditions:

\begin{enumerate}
    \item If $g$ is unate, then $h$ is also unate.
    \item If $g$ is $\eps$-far from unate, then $h$ is also $\eps$-far from unate.
    \item For all $x\in\{0,1\}^d$, we can find $h(x)$ with one query to $g$.
\end{enumerate}
\end{claim}

\begin{proof}
We use padding technique to construct $h$ from $g$. Let $h(x,y)=g(x)$ where $x\in\{0,1\}^m$ and $y\in\{0,1\}^{d-m}$. The first and third conditions can be directly deduced from the definition of $h$. For the second condition, suppose, for the sake of contradiction that $h$ is $\eps$-close from unate. Then, there exists a unate function $\tilde{h}$, such that $\Pr_{x,y}[\tilde{h}(x,y)\neq h(x,y)]<\eps$. By an averaging argument, we can find a $\tilde{y}\in\{0,1\}^{d-m}$ where, $\Pr_{x}[\tilde{h}(x,\tilde{y})\neq h(x,\tilde{y})]<\eps$. Now, let $\tilde{g}(x)=\tilde{h}(x,\tilde{y})$ for all $x\in\{0,1\}^m$. Thus, $\tilde{g}$ is unate and $\Pr_{x}[\tilde{g}(x)\neq g(x)]<\eps$, contradicting the assumption that $g$ is $\eps$-far from unate.
\end{proof}

Now we can construct a tester for a given input $g:\{0,1\}^m\rightarrow R$. Suppose, $h$ is the function that satisfies Claim \ref{clm:clm3}. We run the unateness tester on $h$ and output its answer. By Claim~\ref{clm:clm3}, correctness of this tester follows from the correctness of the tester for $h$. Since in Theorem \ref{thm:thm3} we proved that the unateness tester for $g$ uses $\Omega(m)$ queries and Claim \ref{clm:clm3} condition $3$ holds, the tester for $h$ uses the same number of queries. This means testing a function $f:\{0,1\}^d \rightarrow R$ for unateness, where $|R|=o(\sqrt{d})$, requires $\Omega(m)=\Omega(|R|^2)$ queries.
\end{proof}

Theorem \ref{thm:thm1} follows from Theorems~\ref{thm:thm2}, \ref{thm:thm3} and \ref{thm:thm4}.
\end{proof}

\subsection{Lower Bound for Boolean Functions}\label{sec:lower-bound-for-boolean-functions}
In this section, we prove a nonadaptive, 1-sided error lower bound for testing unateness of Boolean functions over the hypercube domain. This lower bound technique is similar to the one used for proving the lower bound for monotonicity by \citet{FLNRRS02}.

\begin{theorem}\label{thm:one-sided-non-adaptive-lower-bound-unateness}
Every nonadaptive 1-sided error tester for unateness of functions of the form $f:\{0,1\}^d \rightarrow \{0,1\}$ must make $\Omega(\sqrt{d})$ queries.
\end{theorem}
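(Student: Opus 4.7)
The plan is to apply Yao's minimax principle, following the strategy of \citet{FLNRRS02} for the analogous $\Omega(\sqrt{d})$ monotonicity lower bound. The key starting observation is that a nonadaptive 1-sided error tester for unateness rejects only when its query set $Q$ contains a \emph{certificate of non-unateness}: a dimension $i \in [d]$ and points $x, x+e_i, y, y+e_i \in Q$ with $f(x) < f(x+e_i)$ and $f(y) > f(y+e_i)$. Hence it suffices to exhibit a distribution $\mathcal{D}$ over functions $f:\{0,1\}^d \to \{0,1\}$ such that (i) a typical $f \sim \mathcal{D}$ is $\Omega(1)$-far from unate, and (ii) for every fixed query set $Q$ with $|Q|=q=o(\sqrt{d})$, $\Pr_{f \sim \mathcal{D}}[Q \text{ contains a certificate of non-unateness}] < 1/3$. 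Yao's principle then converts this into the desired lower bound against all nonadaptive 1-sided error testers.

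For the distribution, I would adapt the Fischer et al.\ monotonicity construction, which plants many pairwise far-apart down-edges around the middle layers of $\{0,1\}^d$ on top of a majority-like function. The adaptation has to guarantee that for each dimension $i$ the function has both many strict up-edges and many strict down-edges, since a function that is far from monotone but unate (for example, one obtained from a monotone function by flipping a few coordinates) cannot be used to establish a lower bound for unateness. A natural construction is to plant flipped vertex pairs in every dimension, independently randomizing whether each planted pair produces an up-edge or a down-edge in its dimension, and arranging these pairs to be at large Hamming distance from one another.

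Step (a) is then to lower bound the distance from unate by $\Omega(1)$: this reduces to showing that, with high probability over $\mathcal{D}$, in a constant fraction of dimensions $i$ one must change $\Omega(2^d/d)$ values of $f$ to eliminate either all strict up-edges or all strict down-edges in dimension $i$, and summing over dimensions. Step (b) is the query-complexity calculation: for a fixed $Q$ with $|Q|=q$, count the configurations in $Q$ that could form a same-dimension opposite-direction edge pair and bound, under the randomness of $\mathcal{D}$, the probability that any such configuration is an actual violation. The spread-out randomization in $\mathcal{D}$ is what makes the per-configuration probability small, and combining this with a careful counting of edge pairs in $Q$ should yield an $o(1)$ bound whenever $q = o(\sqrt{d})$.

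The main obstacle is the familiar tension between steps (a) and (b): to make the function $\Omega(1)$-far from unate the construction must plant many violations, but these violations also create more opportunities for the tester to witness a certificate. The extra difficulty beyond the monotonicity analysis is that a unateness certificate requires \emph{two} opposite-direction edges in the \emph{same} dimension rather than a single down-edge, so $\mathcal{D}$ must be designed so that conditioning on one queried edge of $Q$ being a strict up-edge in dimension $i$ reveals almost no information about whether any other queried edge of $Q$ in dimension $i$ is a strict down-edge. Making this conditional independence precise while still obtaining constant distance from unate is where the bulk of the technical work should go.
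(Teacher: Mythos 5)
There is a genuine gap: what you have written is a plan that defers exactly the parts that constitute the proof. You never specify the hard distribution, and both key steps are left as intentions (``should yield an $o(1)$ bound'', ``where the bulk of the technical work should go''). The paper does not need the delicate ``plant random spread-out edge pairs with random directions and argue conditional independence'' construction you envision. It uses a much simpler deterministic family: for each $i\in[d/2]$, the function $f_i$ equals $0$ when $|x|<d/2-\sqrt{d}$, equals $1$ when $|x|>d/2+\sqrt{d}$, and equals $x_i\oplus x_{i+d/2}$ in the middle band; the hard distribution is uniform over these $d/2$ functions. Farness is immediate because the $2^{d-2}$ groups of four points differing only in coordinates $i$ and $i+d/2$ each contain a violation in a constant fraction of cases, and detection is bounded by a counting argument: build the comparability graph on the query set $Q$, take a spanning forest (at most $q-1$ edges), and observe that any violated comparable pair for $f_i$ forces some forest edge whose endpoints differ in coordinate $i$ (or $i+d/2$); since all violations live in the band where comparable points differ in at most $2\sqrt{d}$ coordinates, each forest edge can implicate at most $O(\sqrt{d})$ of the $d/2$ functions, so $q$ queries detect at most $O(q\sqrt{d})$ functions and $q=\Omega(\sqrt{d})$ follows. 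None of this counting, nor any substitute for it, appears in your proposal.

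A second, more subtle problem is your characterization of when a $1$-sided nonadaptive tester may reject. You restrict certificates of non-unateness to pairs of hypercube edges $x,x+e_i,y,y+e_i$ all lying in $Q$ with opposite strict directions. But a tester rejects whenever the queried values are inconsistent with \emph{every} unate function, and such inconsistency can be witnessed by comparable pairs at large Hamming distance (ruling out all orientations without any queried hypercube edge). If your step (b) only bounds the probability of edge-certificates, it does not bound the rejection probability of all $1$-sided testers, so Yao's principle does not apply as stated. The paper's spanning-forest argument is formulated precisely over arbitrary comparable queried pairs to avoid this trap, and this is where the factor $2\sqrt{d}$ (the band width) enters. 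To repair your outline you would need both a concrete distribution and a detection bound that accounts for all comparable pairs, at which point you would essentially be reconstructing the paper's argument.
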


The proof of the theorem is by the application of Yao's minimax principle. A 1-sided error tester must reject only if it finds a violation, otherwise it must accept. We define a distribution over a set of ``hard" functions and prove that for this hard distribution, a tester with query complexity $q$ detects a violation with probability at most $O(q/\sqrt{d})$.

For $x\in\{0,1\}^d$, define $|x|$ as the number of bits that are equal to $1$. The set of hard functions is defined as follows: for every $i\in\left[\frac{d}{2}\right]$, define $f_i:\{0,1\}^d \rightarrow \{0,1\}$ as
\begin{equation*}
f_i(x_1,\ldots,x_d) =
\begin{cases}
  1 & \text{if $|x| > d/2 + \sqrt{d}$} \\
  0 & \text{if $|x| < d/2 - \sqrt{d}$} \\
  x_i \oplus x_{i+\frac{d}{2}}& \text{otherwise}
\end{cases}
\end{equation*}

\begin{claim} \label{claim:claim1} 
For every $i\in\left[\frac{d}{2}\right]$, $f_i$ is $\epsilon$-far from unate for some constant $\epsilon$ such that $0 < \epsilon \leq \frac{1}{4}$.
\end{claim}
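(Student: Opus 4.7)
The plan is to show that for every unate $g:\{0,1\}^d\to\{0,1\}$, the Hamming distance between $g$ and $f_i$ is at least a positive constant. The key structural observation is this: consider the $2$D subcube obtained by fixing all coordinates outside $\{i, i+d/2\}$. If all four points of this subcube lie in the middle Hamming band $[d/2-\sqrt{d}, d/2+\sqrt{d}]$, then $f_i$ restricted to the subcube is exactly $x_i\oplus x_{i+d/2}$, and this XOR pattern is not unate in two variables.

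First I would partition $\{0,1\}^d$ into the $2^{d-2}$ disjoint $2$D subcubes spanned by coordinates $i$ and $i+d/2$. Call such a subcube \emph{middle} if all four of its points lie in the band above; this happens exactly when the Hamming weight $w$ of the base (the $d-2$ fixed coordinates) satisfies $d/2-\sqrt{d} \leq w \leq d/2+\sqrt{d}-2$. A Chernoff bound for $\Bin(d-2,1/2)$ then shows that a constant fraction $c>0$ (with $c\to 1$ as $d\to\infty$) of the subcubes are middle.

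Next I would verify the per-subcube contribution. On each middle subcube, $f_i$ restricts to XOR, while any unate $g$ with directions $(b_j)_{j\in[d]}$ restricts to a unate function on $\{0,1\}^2$ with the inherited direction pair $(b_i, b_{i+d/2})$. A short case check across the four possible direction pairs shows that XOR is at Hamming distance at least $1$ from every such unate function on $\{0,1\}^2$. Since the $2^{d-2}$ subcubes form a partition of $\{0,1\}^d$, I can sum one disagreement per middle subcube to conclude that ${\tt dist}(g, f_i) \geq c/4$. Setting $\epsilon := \min(c/4, 1/4)$ then yields the claim.

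The main obstacle I anticipate is pinning down a clean, $d$-independent constant $c$ in the Chernoff step: one needs a concrete tail bound for $\Bin(d-2,1/2)$ deviating by at most $\sqrt{d}$ from its mean, valid for all sufficiently large $d$. The remaining ingredients, namely the XOR-versus-unate case analysis on $\{0,1\}^2$ and the fact that the $2$D subcubes partition the hypercube, are both routine.
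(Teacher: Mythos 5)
Your proposal is correct and follows essentially the same route as the paper: partition the hypercube into the $2^{d-2}$ two-dimensional subcubes along coordinates $i$ and $i+\frac{d}{2}$, observe that a constant fraction of them lie entirely in the middle band where $f_i$ is the XOR of those two coordinates, and charge at least one unavoidable disagreement with any unate function to each such subcube, which is exactly the paper's (terser) grouping-and-repair argument. One immaterial nitpick: your parenthetical ``$c\to 1$'' is off, since the band is roughly $\pm 2$ standard deviations of $\Bin(d-2,\tfrac12)$, so $c$ tends to about $0.95$; any constant $c>0$ suffices, so the claim is unaffected.
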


\begin{proof}
For a fixed $i \in \left[\frac{d}{2}\right]$, group together all points that differ only in the $\ord{i}$ and $\ord{\left(i+\frac{d}{2}\right)}$ coordinate. There are $4$ points in each group as illustrated in Figure~\ref{fig:fig1}. Overall, there are $2^{d-2}$ such independent groups.

It can be seen that a constant fraction of such groups violate unateness of $f_i$ (by its definition). In each such group, the value of $f_i$ on at least $1$ point has to be changed (or ``repaired") to make it unate. Thus, overall, a constant fraction of points have to be changed to make the function unate. Hence, $f_i$ is $\epsilon$-far from unate for some constant $\epsilon$.


\begin{figure}[ht]
\centering
\includegraphics[scale=0.5]{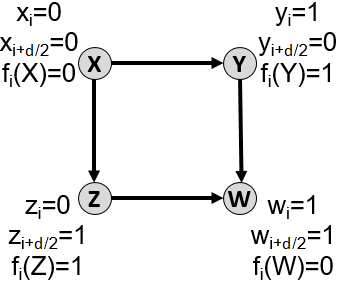}
\caption{Illustration of ``hard" function}
\label{fig:fig1}
\end{figure}
\end{proof}

\begin{proof}[Proof of Theorem~\ref{thm:one-sided-non-adaptive-lower-bound-unateness}]
Let $Q$ be the queried set of points from $\{0,1\}^d$ of size $q$. By the definition of 1-sided error tester, we need to sample at least one violated, comparable pair of points to reject the function (A pair of points $(x,y)$ is \emph{comparable} if either $x \prec y$ or $y \prec x$). So, the tester can detect a violated pair for $f_i$ if there exists two queries $u,v$ such that they are comparable and differ in either $\ord{i}$ or $\ord{\left(i+\frac{d}{2}\right)}$ coordinate.
Construct an undirected graph with vertex set $Q$ and edge set $\{(x,y)| x \text{ and } y \textnormal{ are comparable}, x,y \in Q\}$. Consider a spanning forest of this graph. If there exists a violation by function $f_i$ in pair $(u,v)$ they should be in the same tree. Moreover, there should exist an adjacent pair of vertices in the path between $u$ and $v$ which differ in the $\ord{i}$ bit. Therefore, the total number of functions $f_i$ for which the query set could find the violation is at most the maximum number of edges in the spanning forest (which is at most $q-1$) times the maximum possible distance between two adjacent vertices (which is at most $2\sqrt{d}$ with respect to our function because all violations are for $x \in \{0,1\}^d$ such that $d/2 - \sqrt{d} \leq |x| \leq d/2 + \sqrt{d}$). So at most $O(q\sqrt{d})$ functions can be detected for violation. We have $\frac{d}{2}$ different functions. Hence, the number of queries required is at least $\Omega(\sqrt{d})$.
\end{proof}

\section{Open Questions}\label{sec:open-questions}
Our bounds for testing unateness are nearly tight for real-valued functions. However, there is a gap between the upper and lower bounds for Boolean functions. The first open question is to close this gap. In this paper, we consider testing unateness of real-valued functions with respect to the Hamming distance. Can one design testers with respect to other metrics such as $L_p$-distances, as defined in \cite{BRY14}?

\bibliographystyle{plainnat}
\bibliography{references}

\end{document}